\algrenewcommand\alglinenumber[1]{{\sf\footnotesize#1}}
\let\Algorithm\algorithm
\renewcommand\algorithm[1][]{\Algorithm[#1]\setstretch{1.15}}
\Crefname{algorithm}{Procedure}{Procedures}
\Crefname{equation}{Inequality}{Inequalities}
\tikzset{%
  dot/.style={circle, inner sep=0pt, minimum size=2mm, fill=black}
}
\tikzset{%
  dotlabel/.style={circle, thick, draw, inner sep=1pt, minimum size=9pt}
}
\newtheorem{theorem}{Theorem}
\newtheorem{lemma}{Lemma}
\newtheorem{proposition}{Proposition}
\newcommand{\poly}{\text{poly}}
\newcommand{\etal}{et al.}
\DeclarePairedDelimiter\ceil{\lceil}{\rceil}
\DeclarePairedDelimiter\floor{\lfloor}{\rfloor}
\newcommand{\problemdef}[3]{%
  \medskip
  \hspace{-1em}
    \fbox { \parbox { .9\textwidth} {\vspace{0.1cm}
    \textsc{{\hspace{-0.5em} \textsc{#1}}}
    \vspace{-0.2cm}
    \begin{description}
        \item[Input: ] #2
        \vspace{-0.2cm}
        \item[Task:]\hspace{1mm} #3
    \end{description}
    \vspace{-0.2cm}}}
    \medskip
}
\title{Exact Algorithms for MaxCut on Split Graphs}
\author{Marko Lalovic\footnote{Hamburg University of Technology, Germany. \texttt{marko.lalovic@tuhh.de}.}}
\date{}
\begin{document}

\maketitle

\begin{abstract}
  This paper presents an $O^{*}(1.42^{n})$ time algorithm for the Maximum Cut problem on split graphs, along with a subexponential time algorithm for its decision variant.
\end{abstract}

{\small \textbf{\textit{Keywords---}} Exact Algorithms, Maximum Cut, Split Graphs}

\section{Introduction}
\label{sec:introduction}%

The \emph{Maximum Cut problem}, MaxCut for short, asks to partition a graph's vertices into two sets to maximize the \emph{cut size}, defined as the number of edges between the sets. Variants of MaxCut have applications in various fields, including social networks~\cite{Harary1959}, data clustering~\cite{Poland2006}, and image segmentation~\cite{Sousa2013}. Efficiently solving MaxCut is essential for these applications.

The \emph{decision variant of MaxCut} (given a graph $G$ and $k \in \mathbb{N}$, is there a cut of size at least $k$ in $G$?) was proven to be NP-complete by Garey et al.~\cite{Garey1976}. The fastest algorithm for solving MaxCut, proposed by Williams~\cite{Williams2004}, has a running time of approximately $O^{*}(1.74^n)$~\footnote{The $O^{*}$ notation suppresses polynomially bounded terms. For a positive real constant $c$, we write $O^{*}(c^{n})$ for a time complexity of $O(c^{n} \cdot \poly)$ in the number of vertices $n$. For a detailed discussion of this, see the survey by Woeginger~\cite{Woeginger2003}.}. However, this algorithm requires exponential space. Whether a polynomial space algorithm for MaxCut exists that runs faster than $O^{*}(2^{n})$ is an open problem listed in~\cite{Woeginger2008}. This situation improves when the input is restricted to special classes of graphs.

A split graph is one whose vertex set can be partitioned into a clique and an independent set. This places split graphs "halfway" between bipartite graphs and their complements, making the study of MaxCut on split graphs particularly interesting. Moreover, Bodlaender and Jansen~\cite{Bodlaender2000} proved that decision variant of MaxCut remains NP-complete on split graphs. Sucupira \etal\ \cite{Sucupira2014} provided a polynomial time algorithm for MaxCut on a special subclass called full $(k,n)$-split graphs. More recently, Bliznets and Epifanov~\cite{Bliznets2023} studied MaxCut parameterized above the spanning tree bound and provided subexponential time algorithms for this variant of MaxCut on chordal, co-bipartite, and split graphs.

This paper presents an $O^{*}(1.42^{n})$ time algorithm for MaxCut on split graphs, along with a subexponential time algorithm for the decision variant of MaxCut that is asymptotically optimal under the ETH. Both algorithms use polynomial space. \Cref{sec:preliminaries} covers the notation and preliminaries, including conditional lower bounds based on the ETH. \Cref{sec:expo-algo,sec:subexpo-algo} describe the new algorithms, along with proofs of their correctness and run-time guarantees. In Conclusions, \Cref{sec:conclusions} outlines several promising paths for some follow-up work.

\section{Preliminaries}
\label{sec:preliminaries}%

Let $G = \left(V, E\right)$ be a simple graph on $n$ vertices. Complement of the graph $G$ is denoted by $\overline{G}$. The neighborhood of a vertex $v$ is denoted by $N(v) = \lbrace u \in V : (u, v) \in E \rbrace$. A subset $C \subseteq V$ is called a {\it clique}, if every two vertices in $C$ are adjacent. A subset $I \subseteq V$ is called an {\it independent set}, if no two vertices in $I$ are adjacent. A partition of vertices $S \subseteq V$ into two disjoint subsets $S_{1}$ and $S_{2}$ is denoted by $(S_{1}, S_{2})$. A partition $(V_{1}, V_{2})$ of $V$ is called a \emph{cut}. The subset of edges $E$ that have one endpoint in $V_{1}$ and the other endpoint in $V_{2}$ is denoted by $E(V_{1}, V_{2})$. The \emph{size} of a cut is the cardinality of $E(V_{1}, V_{2})$ denoted by $k$:
$$
k = |E(V_{1}, V_{2})|.
$$

The MaxCut problem asks for a cut of maximum size in a given graph $G$. In the decision variant of MaxCut one is given a graph $G$ and an integer $k$, and the question is whether $G$ has a cut of size at least $k$. Formally, the problems are defined as follows:

\problemdef{MaxCut}
{A graph $G$.}
{Find a cut of maximum size in $G$.}

\problemdef{MaxCut (decision variant)}
{A graph $G$ and an integer $k$.}
{Does $G$ have a cut of size $\geq k$?}

A graph $G = (V, E)$ is a split graph if $V$ can be partitioned into a clique $C$ and an independent set $I$. Such a partition can be found in linear time, as shown in~\cite{Golumbic2004}. If $G$ is a disconnected graph, then a maximum cut can be determined by solving MaxCut on each of its connected components individually and combining the results. If $G$ is a complete graph or an empty graph, then MaxCut on $G$ is trivial. Therefore, when solving MaxCut, we may assume that the input graph $G$ is a connected split graph with a partition of its vertex set into a clique $C$ and an independent set $I$, where $|C| \geq 1$ and $|I| \geq 1$.

Bodlaender and Jansen~\cite{Bodlaender2000} introduced a reduction of MaxCut on an arbitrary graph $G$ to MaxCut on a split graph $G'$. This reduction transforms graph $G$ into split graph $G'$ with $|V| + |E(\overline{G})|$ vertices and $|V| \cdot (|V| - 1) / 2 + 2|E(\overline{G})|$ edges. Moreover, there is a maximum cut of size $k$ in $G$ if and only if there is a maximum cut of size $k + 2|E(\overline{G})|$ in $G'$. \Cref{fig:graph-G} shows a graph $G$ with $|E(\overline{G})| = 4$ and a cut $(\lbrace v_1, v_3, v_5 \rbrace, \lbrace v_2, v_4 \rbrace)$ of size 5. The corresponding split graph $G'$, shown in \Cref{fig:split-graph-G-prime}, has a cut of size $5 + 2 \cdot 4 = 13$.

\begin{figure}
\centering
\begin{minipage}{.5\textwidth}
  \centering
\begin{tikzpicture}
\begin{scope}[fill opacity=0.5, text opacity=1]
    \node[dotlabel] (v1) at (0, 0) {\small $v_1$};
    \node[dotlabel, fill=lightgray] (v2) at (1.14, 0) {\small $v_2$};
    \node[dotlabel] (v3) at (1.49, 1.09) {\small $v_3$};
    \node[dotlabel, fill=lightgray] (v4) at (0.63, 1.83) {\small $v_4$};
    \node[dotlabel] (v5) at (-0.35, 1.09) {\small $v_5$};
    \draw[-, thick] (v1) -- (v2);
    \draw[-, thick] (v2) -- (v3);
    \draw[-, thick] (v3) -- (v4);
    \draw[-, thick] (v4) -- (v5);
    \draw[-, thick] (v5) -- (v1);
    \draw[-, thick] (v5) -- (v2);
\end{scope}
\end{tikzpicture}
  \captionof{figure}{A graph $G$.}
  \label{fig:graph-G}%
\end{minipage}%
\begin{minipage}{.5\textwidth}
  \centering
\begin{tikzpicture}
\begin{scope}[fill opacity=0.5, text opacity=1]
    \node[dotlabel] (v1) at (0, 0) {\small $v_1$};
    \node[dotlabel, fill=lightgray] (v2) at (1.14, 0) {\small $v_2$};
    \node[dotlabel] (v3) at (1.49, 1.09) {\small $v_3$};
    \node[dotlabel, fill=lightgray] (v4) at (0.63, 1.83) {\small $v_4$};
    \node[dotlabel] (v5) at (-0.35, 1.09) {\small $v_5$};
    \draw[-, thick] (v1) -- (v2); %
    \draw[-, thin] (v1) -- (v3);
    \draw[-, thin] (v1) -- (v4);
    \draw[-, thick] (v1) -- (v5); %
    \draw[-, thick] (v2) -- (v3); %
    \draw[-, thin] (v2) -- (v4);
    \draw[-, thick] (v2) -- (v5); %
    \draw[-, thick] (v3) -- (v4); %
    \draw[-, thin] (v3) -- (v5);
    \draw[-, thick] (v4) -- (v5); %
    
    \node[dotlabel, fill=lightgray] (e1) at (3.45, 3.5) {\small $v_9$};
    \node[dotlabel] (e3) at (3.45, 1.79) {\small $v_8$};    
    \node[dotlabel, fill=lightgray] (e2) at (3.45, 0.07) {\small $v_7$};
    \node[dotlabel, fill=lightgray] (e4) at (3.45, -1.64) {\small $v_6$};
    
    \draw[-, thin] (e1) -- (v1);
    \draw[-, thin] (e1) -- (v4);

    \draw[-, thin] (e2) -- (v3);
    \draw[-, thin] (e2) -- (v5);

    \draw[-, thin] (e3) -- (v2);
    \draw[-, thin] (e3) -- (v4);
    
    \draw[-, thin] (e4) -- (v1);
    \draw[-, thin] (e4) -- (v3);
\end{scope}
\end{tikzpicture}
  \captionof{figure}{A split graph $G'$.}
  \label{fig:split-graph-G-prime}%
\end{minipage}
\end{figure}

Under the Exponential Time Hypothesis (ETH), MaxCut cannot be solved in $2^{o(n)} \, \poly$ time on an arbitrary graph $G$, as proved in~\cite{Garey1976}. Using this foundational result along with the described reduction, we can establish the following lower bounds for MaxCut on split graphs.

\begin{proposition}\label{prop: lower-bounds}
There are no algorithms for MaxCut or its decision variant on split graphs with a running time of\, $2^{o(\sqrt{n})} \, \poly$ or\, $2^{o(\sqrt{k})} \, \poly$ unless ETH fails.
\end{proposition}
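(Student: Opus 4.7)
The plan is to reduce from the general MaxCut problem via the Bodlaender--Jansen construction already recalled in the excerpt, and to exploit the fact that the reduction blows up the parameters only quadratically, so that a square root on the split-graph side matches a linear bound on the general-graph side.

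Concretely, I would argue by contradiction. Suppose there is an algorithm $\mathcal{A}$ solving MaxCut (or its decision variant) on split graphs in time $2^{o(\sqrt{n'})}\poly$, where $n'$ is the number of vertices of the split graph. Given an arbitrary instance $G$ on $n$ vertices, apply the Bodlaender--Jansen reduction in polynomial time to obtain a split graph $G'$ on
\[
n' = n + |E(\overline{G})| \leq n + \binom{n}{2} = O(n^{2})
\]
vertices. The reduction preserves maximum cut values up to the additive term $2|E(\overline{G})|$, so running $\mathcal{A}$ on $G'$ recovers the maximum cut value in $G$. Since $\sqrt{n'} = O(n)$, the overall running time is $2^{o(n)}\poly$, contradicting the ETH-based lower bound of Garey et al.\ for MaxCut on general graphs.

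For the $2^{o(\sqrt{k})}\poly$ bound, the same reduction is used, but now I track the target cut size instead of the vertex count. The maximum cut size in $G'$ is $k' = k + 2|E(\overline{G})|$, and since $k \leq \binom{n}{2}$ and $|E(\overline{G})| \leq \binom{n}{2}$, we have $k' = O(n^{2})$ and thus $\sqrt{k'} = O(n)$. A hypothetical $2^{o(\sqrt{k'})}\poly$ algorithm on split graphs therefore again yields a $2^{o(n)}\poly$ algorithm on general graphs, contradicting ETH. The same two contradictions handle both the optimization and decision versions, since the reduction carries threshold $k$ on $G$ to threshold $k + 2|E(\overline{G})|$ on $G'$.

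There is no real obstacle here: the content is entirely the quadratic blow-up of the reduction combined with the square root in the target lower bound. The only thing to be careful about is bookkeeping the polynomial overhead of the reduction (it runs in time polynomial in $n$) so that it is correctly absorbed into the $\poly$ factor of the final running time, and making clear that both lower bounds transfer simultaneously to the optimization problem and its decision variant.
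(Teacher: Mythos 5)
Your argument is correct and follows essentially the same route as the paper: apply the Bodlaender--Jansen reduction, observe that both $n'$ and $k'$ are $O(n^{2})$, and conclude that a $2^{o(\sqrt{n'})}\,\poly$ or $2^{o(\sqrt{k'})}\,\poly$ algorithm on split graphs would yield a $2^{o(n)}\,\poly$ algorithm for MaxCut on general graphs, contradicting the ETH. No substantive differences from the paper's proof.
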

\begin{proof}
Let $G = (V, E)$ be an arbitrary graph with $n$ vertices and a maximum cut of size $k$. Using the reduction by Bodlaender and Jansen~\cite{Bodlaender2000}, we can construct the corresponding split graph $G'$ in polynomial time. The split graph $G'$ will have
$$
n' = |V(G)| + |E(\overline{G})|
$$ vertices and a maximum cut of size 
$$
k' = k + 2|E(\overline{G})|.
$$

Note that $n' = O(n^2)$ and $k' = O(n^2)$. Therefore, if we could solve MaxCut on $G'$ in $2^{o(\sqrt{n'})} \, \poly$ or $2^{o(\sqrt{k'})} \, \poly$ time, we would be able to solve MaxCut on the original graph $G$ in $2^{o(n)} \, \poly$ time. However, under the ETH assumption, MaxCut does not admit a $2^{o(n)} \, \poly$ algorithm on an arbitrary graph $G$.
\end{proof}

\section{Exponential Algorithm}
\label{sec:expo-algo}%

This section provides an exponential time algorithm for MaxCut on split graphs. First, we describe two procedures for MaxCut that are used as subroutines and provide proofs of correctness. The algorithm presented in the next section is also based on these results.

\begin{figure}
\noindent
\begin{minipage}[t]{0.55\textwidth}
\begin{algorithm}[H]
\begin{algorithmic}[1]
\Require Graph $G = (V, E)$, independent set $I \subseteq V$
\Ensure Maximum cut $(V_{1}, V_{2})$ of $G$
\State $C \gets V \setminus I$
\State $maxCutSize \gets 0$
\ForAll{$C_{1} \subseteq C$}
  \State $C_{2} \gets C \setminus C_{1}$
  \State $I_{1} \gets \lbrace v \in I : |N(v) \cap C_{2}| \geq |N(v) \cap C_{1}| \rbrace$
  \State $I_{2} \gets I \setminus I_{1}$
  \State $cutSize \gets |E(C_{1} \cup I_{1}, C_{2} \cup I_{2})|$
  \If{$cutSize \geq maxCutSize$}
    \State $(V_{1}, V_{2}) \gets (C_{1} \cup I_{1}, C_{2} \cup I_{2})$
    \State $maxCutSize \gets cutSize$
  \EndIf
\EndFor
\State \textbf{return} $(V_{1}, V_{2})$
\end{algorithmic}
\caption{\textproc{MaxCut}$(G, I)$}\label{algo:proc1}
\end{algorithm}
\end{minipage}%
\hspace{2em}%
\begin{minipage}[t]{0.4\textwidth}\vspace{40pt}
\centering
\begin{tikzpicture}[xscale=1,yscale=1.2]
    \node[dot] (v1) at (0, 4) {};    
    \node[dot] (v2) at (0, 3) {};    
    \node[dot] (v3) at (0, 2) {};    
    \node[dot] (v4) at (0, 1) {};    
    \node[dot] (v5) at (0, 0) {};
    
    \node[dot] (u1) at (2, 4) {};    
    \node[dot] (u2) at (2, 3) {};
    \node[dot] (u3) at (2, 2) {};
    \node[dot] (u4) at (2, 1) {};
    \node[dot] (u5) at (2, 0) {};

    \node[fit=(v1) (v2), thick, draw=gray, minimum width=1.2cm] {};
    \node[left=.6cm, font=\boldmath] at (0, 3.5) {$C_{1}$};

    \node[fit=(v3) (v5), thick, dashed, draw=lightgray, minimum width=1.2cm] {};
    \node[left=.6cm, font=\boldmath, opacity=1] at (v4) {$C_{2}$};

    \node[fit=(u1) (u2), thick, dashed, draw=lightgray, opacity=0.8] {};
    \node[right=.25cm, font=\boldmath] at (2, 3.5) {$I_{2}$};
    
    \node[fit=(u3) (u5), thick, draw=gray] {};
    \node[right=.25cm, font=\boldmath] at (u4) {$I_{1}$};

    \draw[-] (v1) -- (v2);
    \draw[-] (v2) -- (v3);
    \draw[-] (v3) -- (v4);
    \draw[-] (v4) -- (v5);    
    \draw[-] (v1) to[bend right] (v3);
    \draw[-] (v1) to[bend right] (v4);
    \draw[-] (v1) to[bend right] (v5);
    \draw[-] (v2) to[bend right] (v4);
    \draw[-] (v2) to[bend right] (v5);
    \draw[-] (v3) to[bend right] (v5);

    \draw[-] (u1) -- (v1);
    \draw[-] (u2) -- (v1);
    \draw[-] (u2) -- (v2);
    \draw[-] (u2) -- (v3);
    \draw[-] (u3) -- (v4);
    \draw[-] (u3) -- (v5);
    \draw[-] (u4) -- (v5);
    \draw[-] (u5) -- (v3);
    \draw[-] (u5) -- (v5);  
\end{tikzpicture}
\captionof{figure}{A split graph.}
\label{fig:graph-1}%
\end{minipage}
\end{figure}

Given a graph $G = (V, E)$ and an independent set $I \subseteq V$, let $(C_1, C_2)$ be a partition of $C = V \setminus I$. Define $I_1$ as the set of vertices in $I$ that have more neighbors in $C_2$ than in $C_1$, and let $I_2 = I \setminus I_1$. This partition $(I_1, I_2)$ of $I$ maximizes the size of the cut $(C_1 \cup I_1, C_2 \cup I_2)$. An example of a split graph with a maximum cut $(C_1 \cup I_1, C_2 \cup I_2)$ of size 14 is shown in \Cref{fig:graph-1}. By evaluating all possible partitions $(C_1, C_2)$ of $C$ we can find a maximum cut $(V_1, V_2)$ of $G$. Refer to \Cref{algo:proc1} and \Cref{lem:correctness1}.

\begin{lemma}\label{lem:correctness1} Given a graph $G = (V, E)$ and an independent set $I \subseteq V$, \Cref{algo:proc1} returns a maximum cut $(V_1, V_2)$ of $G$.
\end{lemma}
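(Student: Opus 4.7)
The plan is to argue via a standard exchange argument that the greedy placement of vertices in $I$ is optimal for any fixed partition of $C$, and then conclude by noting that the algorithm enumerates every possible partition of $C$.

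First I would decompose the cut size. Fix any partition $(C_1, C_2)$ of $C$ and any partition $(I_1, I_2)$ of $I$. Since $I$ is independent, no edge of $G$ has both endpoints in $I$, so
\[
|E(C_1 \cup I_1,\, C_2 \cup I_2)| \;=\; |E(C_1, C_2)| \;+\; \sum_{v \in I_1} |N(v) \cap C_2| \;+\; \sum_{v \in I_2} |N(v) \cap C_1|.
\]
With $(C_1, C_2)$ fixed, the first term is a constant, and each vertex $v \in I$ contributes independently, depending only on whether it is placed in $I_1$ (contribution $|N(v) \cap C_2|$) or in $I_2$ (contribution $|N(v) \cap C_1|$). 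Hence the total is maximized by placing each $v \in I$ on the side opposite to its majority-neighborhood, which is exactly the rule $I_1 = \{v \in I : |N(v) \cap C_2| \geq |N(v) \cap C_1|\}$ used on line~5; ties can be broken either way without changing the cut size.

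Next I would close the argument. Let $(V_1^*, V_2^*)$ be any maximum cut of $G$, and set $C_1^* = V_1^* \cap C$ and $C_2^* = V_2^* \cap C$. The main loop iterates over every subset of $C$, so in some iteration it considers $(C_1, C_2) = (C_1^*, C_2^*)$. By the previous paragraph, the partition $(I_1, I_2)$ computed in that iteration yields a cut of size at least $|E(V_1^*, V_2^*)|$, since $(V_1^* \cap I,\, V_2^* \cap I)$ is one particular partition of $I$ achievable in that iteration. Because the algorithm stores the best cut seen across all iterations, the returned pair $(V_1, V_2)$ has size at least the maximum, and it is a cut of $G$ by construction, so it is a maximum cut.

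I do not anticipate a real obstacle: the key point is simply observing that the absence of edges inside $I$ makes the contribution of each $v \in I$ additive and independent of the other vertices in $I$, which reduces the problem to an exhaustive enumeration over $2^{|C|}$ partitions of $C$. The only thing to be slightly careful about is the tie-breaking convention for vertices with $|N(v) \cap C_1| = |N(v) \cap C_2|$, but since both placements yield identical contributions, the algorithm's choice does not affect optimality.
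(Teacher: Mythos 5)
Your proposal is correct and follows essentially the same argument as the paper: fix the partition of $C$ induced by an optimal cut (which the exhaustive loop necessarily considers), note that each $v \in I$ contributes independently since $I$ is independent, and conclude that the greedy assignment of $I$ achieves at least the optimal cut size. Your explicit decomposition of the cut size merely spells out what the paper's per-vertex bound $\max\{|N(v)\cap C_1|, |N(v)\cap C_2|\}$ encodes, so there is no substantive difference.
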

\begin{proof}
Let $(V_{1}', V_{2}')$ be a cut in $G$. Let $C = V \setminus I$. \Cref{algo:proc1} considers all subsets of $C$ including the partition
$$ (C_{1}, C_{2}) = (V_{1}' \cap C, V_{2}' \cap C). $$

For each vertex $v \in I$, the number of edges in the cut is at most
$$ \max \lbrace |N(v) \cap C_{1}|, |N(v) \cap C_{2}| \rbrace. $$

If $|N(v) \cap C_{1}| \geq |N(v) \cap C_{2}|$, $v$ is assigned to $I_{1}$, otherwise, $v$ is assigned to $I_{2}$. Consequently,
$$ |E(C_{1} \cup I_{1}, C_{2} \cup I_{2})| \geq |E(V_{1}', V_{2}')|. $$
\end{proof}

\begin{figure}
\noindent
\begin{minipage}[t]{0.55\textwidth}
\begin{algorithm}[H]
\begin{algorithmic}[1]
\Require Graph $G = (V, E)$, clique $C \subseteq V$
\Ensure Maximum cut $(V_{1}, V_{2})$ of $G$
\State $I \gets V \setminus C$
\State $maxCutSize \gets 0$
\ForAll{$I_{1} \subseteq I$}
  \State $I_{2} \gets I \setminus I_{1}$
  \State $\textsc{Sort}\left(C, |N(v) \cap I_{2}| - |N(v) \cap I_{1}|\right)$
  \ForAll{$m \in \lbrace 0,1,2, \dots, |C| \rbrace$}
    \State $C_{1} \gets \lbrace v_{1}, v_{2}, \dots, v_{m} \rbrace$
    \State $C_{2} \gets C \setminus C_{1}$
    \State $cutSize \gets |E(C_{1} \cup I_{1}, C_{2} \cup I_{2})|$
    \If{$cutSize \geq maxCutSize$}
      \State $(V_{1}, V_{2}) \gets (C_{1} \cup I_{1}, C_{2} \cup I_{2})$
      \State $maxCutSize \gets cutSize$
    \EndIf
  \EndFor
\EndFor
\State \textbf{return} $(V_{1}, V_{2})$
\end{algorithmic}
\caption{\textproc{MaxCut}$(G, C)$}
\label{algo:proc2}%
\end{algorithm}
\end{minipage}%
\hspace{2em}%
\begin{minipage}[t]{0.4\textwidth}\vspace{55pt}
\centering
\begin{tikzpicture}[xscale=1,yscale=1.2]
    \node[dotlabel] (v1) at (0, 4) {\small $v_{1}$};    
    \node[dotlabel] (v2) at (0, 3) {\small $v_{2}$};        
    \node[dotlabel] (v3) at (0, 2) {\small $v_{3}$}; 
    \node[dotlabel] (v4) at (0, 1) {\small $v_{4}$};
    \node[dotlabel] (v5) at (0, 0) {\small $v_{5}$};
    
    \node[dot] (u1) at (2, 4) {};    
    \node[dot] (u2) at (2, 3) {};
    \node[dot] (u3) at (2, 2) {};
    \node[dot] (u4) at (2, 1) {};
    \node[dot] (u5) at (2, 0) {};

    \node[fit=(v1) (v2), thick, draw=gray, minimum width=1.2cm] {};
    \node[left=.6cm, font=\boldmath] at (0, 3.5) {$C_{1}$};

    \node[fit=(v3) (v5), thick, dashed, draw=lightgray, minimum width=1.2cm] {};
    \node[left=.6cm, font=\boldmath, opacity=1] at (v4) {$C_{2}$};

    \node[fit=(u1) (u2), thick, dashed, draw=lightgray, opacity=0.8] {};
    \node[right=.25cm, font=\boldmath] at (2, 3.5) {$I_{2}$};
    
    \node[fit=(u3) (u5), thick, draw=gray] {};
    \node[right=.25cm, font=\boldmath] at (u4) {$I_{1}$};

    \draw[-] (v1) -- (v2);
    \draw[-] (v2) -- (v3);
    \draw[-] (v3) -- (v4);
    \draw[-] (v4) -- (v5);    
    \draw[-] (v1) to[bend right] (v3);
    \draw[-] (v1) to[bend right] (v4);
    \draw[-] (v1) to[bend right] (v5);
    \draw[-] (v2) to[bend right] (v4);
    \draw[-] (v2) to[bend right] (v5);
    \draw[-] (v3) to[bend right] (v5);

    \draw[-] (u1) -- (v1);
    \draw[-] (u2) -- (v1);
    \draw[-] (u2) -- (v2);
    \draw[-] (u2) -- (v3);
    \draw[-] (u3) -- (v4);
    \draw[-] (u3) -- (v5);
    \draw[-] (u4) -- (v5);
    \draw[-] (u5) -- (v3);
    \draw[-] (u5) -- (v5);  
\end{tikzpicture}
\captionof{figure}{A split graph.}
\label{fig:graph-2}%
\end{minipage}
\end{figure}

Given a graph $G = (V, E)$ and a clique $C \subseteq V$, let $(I_1, I_2)$ be a partition of $I = V \setminus C$. Define $C_1$ as the set of $m$ vertices $v$ in $C$ with the highest values of $ |N(v) \cap I_{2}| - |N(v) \cap I_{1}|,$ and let $C_2 = C \setminus C_1$. This ensures that vertices in $C_1$ have many neighbors in $I_2$ and vertices in $C_2$ have many neighbors in $I_1$. By considering all possible values of $m$, we can find a partition $(C_1, C_2)$ of $C$ that maximizes the size of the cut $(C_1 \cup I_1, C_2 \cup I_2).$ An example of a split graph with a maximum cut $(C_1 \cup I_1, C_2 \cup I_2)$ of size 14 is shown in \Cref{fig:graph-2}. In this example, $m = 2$, and vertices $v_{1} \in C$ are listed in non-increasing order based on $|N(v_{i}) \cap I_{2}| - |N(v_{i}) \cap I_{1}|$. By evaluating all possible partitions $(I_1, I_2)$ of $I$, we can find a maximum cut $(V_1, V_2)$ of $G$. Refer to \Cref{algo:proc2} and \Cref{lem:correctness2}, which shows that the cut produced by this procedure is at least as large as any other possible cut in $G$.

\begin{lemma}\label{lem:correctness2} Given a graph $G = (V, E)$ and a clique $C \subseteq V$, \Cref{algo:proc2} returns a maximum cut $(V_1, V_2)$ of $G$.
\end{lemma}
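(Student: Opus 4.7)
The plan is to mirror the style of Lemma 1: fix an arbitrary cut $(V_1', V_2')$ of $G$ and show that some iteration of \Cref{algo:proc2} produces a cut of size at least $|E(V_1', V_2')|$. Write $I_1' = V_1' \cap I$, $I_2' = V_2' \cap I$ and $m' = |V_1' \cap C|$. The outer loop over subsets of $I$ visits the partition $(I_1, I_2) = (I_1', I_2')$, and the inner loop over $m$ visits $m = m'$, so it suffices to compare the cut built in that iteration to $(V_1', V_2')$.

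The main step is to decompose the cut size for a fixed partition $(I_1, I_2)$ of $I$ into contributions that are independent of $C_1$ and contributions that depend on $C_1$. Edges with both endpoints in $I$ that cross the cut depend only on $(I_1, I_2)$. Edges inside the clique $C$ contribute exactly $|C_1|\cdot(|C|-|C_1|)$, which depends only on $m = |C_1|$. The remaining contribution comes from edges between $C$ and $I$: a vertex $v \in C$ placed in $C_1$ contributes $|N(v) \cap I_2|$ to the cut, while placing it in $C_2$ contributes $|N(v) \cap I_1|$. Writing
\[
\sum_{v \in C_1} |N(v) \cap I_2| + \sum_{v \in C_2} |N(v) \cap I_1| \;=\; \sum_{v \in C} |N(v) \cap I_1| \;+\; \sum_{v \in C_1} \bigl(|N(v) \cap I_2| - |N(v) \cap I_1|\bigr),
\]
the first term on the right-hand side is a constant in $C_1$, so once $(I_1, I_2)$ and $m$ are fixed, maximizing the cut reduces to choosing an $m$-element subset $C_1 \subseteq C$ that maximizes $\sum_{v \in C_1} \bigl(|N(v) \cap I_2| - |N(v) \cap I_1|\bigr)$.

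This last problem is solved optimally by the greedy rule of picking the $m$ vertices of $C$ with the largest values of $|N(v) \cap I_2| - |N(v) \cap I_1|$, which is exactly what the \textsc{Sort} step followed by taking the prefix $\{v_1, \dots, v_m\}$ does. Applying this with $(I_1, I_2) = (I_1', I_2')$ and $m = m'$ yields a cut whose three contributions (edges inside $I$, edges inside $C$, edges between $C$ and $I$) are each at least as large as the corresponding contributions for $(V_1', V_2')$, so the returned $(V_1, V_2)$ is a maximum cut. The only delicate point is the bookkeeping in the decomposition above; everything else reduces to the elementary fact that a sum under a cardinality constraint is maximized by selecting the largest summands, which the sorting step makes explicit.
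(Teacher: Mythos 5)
Your proposal is correct and takes essentially the same route as the paper's proof: fix an arbitrary cut $(V_1', V_2')$, note that the loops visit its induced partition $(I_1, I_2)$ and $m = |V_1' \cap C|$, and observe that taking the sorted prefix maximizes $\sum_{v \in C_1}\bigl(|N(v) \cap I_2| - |N(v) \cap I_1|\bigr)$, which is exactly the paper's inequality after adding $\sum_{v \in C}|N(v) \cap I_1|$ to both sides. Your explicit bookkeeping of edges with both endpoints in $I$ is a minor extra care (the paper's displayed cut-size equality tacitly assumes $I$ is independent, though its comparison argument is unaffected since those edges coincide in both cuts), but the underlying argument is the same.
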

\begin{proof}
Let $(V_{1}', V_{2}')$ be a cut in $G$. Let $I = V \setminus C$. \Cref{algo:proc2} considers all subsets of $I$ including the partition
$$ (I_{1}, I_{2}) = (V_{1}' \cap I, V_{2}' \cap I). $$

It then sorts the vertices in $C$ in non-increasing order based on
$$ |N(v) \cap I_{2}| - |N(v) \cap I_{1}|. $$

Denote the vertices in $C$ in this order as:
$$ C = \lbrace v_{1}, v_{2}, \dots, v_{|C|} \rbrace. $$

Thus, for all $i = 1, \dots, |C| - 1$, we have:
\begin{equation}
\label{ineq:ordering}%
|N(v_{i}) \cap I_{2}| - |N(v_{i}) \cap I_{1}| \geq
|N(v_{i+1}) \cap I_{2}| - |N(v_{i+1}) \cap I_{1}|.
\end{equation}

Let $m = |V_{1}' \cap C|$. \Cref{algo:proc2} considers all possible values of $m$, so we can assume $m$ is known. It then selects the following partition of $C$:
$$ C_{1} = \lbrace v_{1}, v_{2}, \dots, v_{m} \rbrace, \quad C_{2} = C \setminus C_{1}. $$

This partition maximizes the size of the cut in $G$, which can be expressed as:
$$ |E(C_{1} \cup I_{1}, C_{2} \cup I_{2})| = m (|C| - m) +  \sum_{v \in C_{1}} |N(v) \cap I_{2}| + \sum_{v \in C_{2}} |N(v) \cap I_{1}|. $$

To prove this, recall that $m = |V_{1}' \cap C|$, and denote 
$$ (C_{1}', C_{2}') = (V_{1}' \cap C, V_{2}' \cap C).$$ 

From the ordering (\Cref{ineq:ordering}), we have:
$$
\sum_{v \in C_{1}} |N(v) \cap I_{2}| - |N(v) \cap I_{1}| \geq
\sum_{v \in C_{1}'} |N(v) \cap I_{2}| - |N(v) \cap I_{1}|.
$$
Adding $\sum_{v \in C} |N(v) \cap I_{1}|$ to both sides of this inequality, we get:
$$
\sum_{v \in C_{1}} |N(v) \cap I_{2}| + \sum_{v \in C_{2}} |N(v) \cap I_{1}|
\geq
\sum_{v \in C_{1}'} |N(v) \cap I_{2}| + \sum_{v \in C_{2}'} |N(v) \cap I_{1}|.
$$
Therefore, the size of the cut produced by \Cref{algo:proc2} satisfies:
$$ |E(C_{1} \cup I_{1}, C_{2} \cup I_{2})| \geq |E(V_{1}', V_{2}')|. $$
\end{proof}

\begin{theorem} MaxCut on split graphs can be solved in time 
$$ O\left(2^{n/2} \, \poly\right) = O^{*}(1.42^{n}). $$
\end{theorem}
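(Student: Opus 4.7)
The plan is to combine \Cref{algo:proc1} and \Cref{algo:proc2} by choosing whichever one enumerates over the smaller side of the split partition. Given a connected split graph $G$, I would first compute a partition of $V$ into a clique $C$ and an independent set $I$ in linear time using the algorithm from~\cite{Golumbic2004}. Then I would compare $|C|$ and $|I|$: if $|C| \le |I|$, run \Cref{algo:proc1} (which enumerates subsets of $C$); otherwise, run \Cref{algo:proc2} (which enumerates subsets of $I$). Correctness of each branch is exactly \Cref{lem:correctness1} and \Cref{lem:correctness2}.

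For the run-time, \Cref{algo:proc1} performs $2^{|C|}$ iterations, each doing polynomial work (partitioning $I$ by a neighborhood comparison and counting cut edges), so it runs in $O(2^{|C|}\,\poly)$. Symmetrically, \Cref{algo:proc2} performs $2^{|I|}$ outer iterations, each involving a sort of $C$ and at most $|C|+1$ inner iterations of polynomial cut-size evaluations, giving $O(2^{|I|}\,\poly)$. Since $|C|+|I|=n$, we have $\min(|C|,|I|) \le n/2$, so the chosen procedure runs in
$$
O\!\left(2^{\min(|C|,|I|)}\,\poly\right) \;\le\; O\!\left(2^{n/2}\,\poly\right).
$$
Finally, $2^{1/2} = \sqrt{2} \approx 1.4143 < 1.42$, so $2^{n/2} \le 1.42^{n}$ for all sufficiently large $n$, yielding the claimed $O^{*}(1.42^{n})$ bound.

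There is no real obstacle here; the only thing worth being careful about is that the split partition is obtained up front and is available to both procedures, and that the polynomial overhead inside each iteration (neighborhood intersections, the sort in \Cref{algo:proc2}, and cut-size computation) is genuinely polynomial in $n$, which it plainly is. Thus both correctness and the run-time bound follow directly from the two lemmas together with the $\min(|C|,|I|) \le n/2$ observation.
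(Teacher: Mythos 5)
Your proposal is essentially the same as the paper's proof: compute the split partition, then case on which of $|C|$, $|I|$ is at most $n/2$ and run \Cref{algo:proc1} or \Cref{algo:proc2} accordingly, invoking \Cref{lem:correctness1} and \Cref{lem:correctness2} for correctness and $\min(|C|,|I|) \le n/2$ for the $O(2^{n/2}\,\poly)$ bound. It is correct and complete.
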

\begin{proof}
Let $G = (V, E)$ be a split graph with a partition of its vertex set into a clique $C$ and an independent set $I$. Such a partition can be found in linear time, as shown in~\cite{Golumbic2004}.

For $|C| \leq n/2 $, by \Cref{lem:correctness1}, we can solve MaxCut using \Cref{algo:proc1} by considering all possible subsets of $C$ in time
$$ O\left(2^{|C|} \, \poly\right) = O\left(2^{n/2} \, \poly\right). $$

For $|C| > n/2$, by \Cref{lem:correctness2}, we can solve MaxCut using \Cref{algo:proc2} by considering all possible subsets of $I$ in time 
$$ O\left(2^{n - |C|} \, \poly \right) = O\left(2^{n/2} \, \poly\right). $$
\end{proof}

Note that neither \Cref{lem:correctness1} nor \Cref{lem:correctness2} assumes that the input graph $G$ is a split graph. Therefore, these lemmas can also be used to obtain faster algorithms for MaxCut on graphs with a large independent set or a large clique. Given a graph $G = (V, E)$ and a subset $S \subseteq V$, which is either a clique or an independent set, then MaxCut can be solved on $G$ in time $O\left(2^{|V \setminus S|} \, \poly \right).$

\section{Subexponential Algorithm}
\label{sec:subexpo-algo}%

This section describes a parameterized subexponential time algorithm for the decision variant of the MaxCut problem on split graphs. According to Proposition~\ref{prop: lower-bounds}, this algorithm is asymptotically optimal unless the ETH is false.

\begin{theorem}\label{thm:decision-maxcut}
Decision variant of MaxCut on split graphs can be solved in $2^{O(\sqrt{k})} \, \poly$ time.
\end{theorem}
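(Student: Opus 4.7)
The plan is to reduce to the exponential algorithm of \Cref{sec:expo-algo} by a simple case split on the size of the clique. First, compute a split partition $(C, I)$ of $V$ in linear time using the algorithm of Golumbic~\cite{Golumbic2004}, and set $c = |C|$.

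The key observation driving the approach is that a clique of size $c$ by itself guarantees a cut of size $\lfloor c/2 \rfloor \cdot \lceil c/2 \rceil$: partitioning $C$ into two halves $(C_1, C_2)$ of sizes $\lfloor c/2 \rfloor$ and $\lceil c/2 \rceil$ and placing all of $I$ on (say) the $C_1$ side produces a cut $(C_1 \cup I, C_2)$ of size at least $|C_1| \cdot |C_2| = \lfloor c/2 \rfloor \lceil c/2 \rceil$, since $I$ contributes no internal edges to the graph. Consequently, if $\lfloor c/2 \rfloor \lceil c/2 \rceil \geq k$ we may immediately output YES.

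Otherwise, $\lfloor c/2 \rfloor \lceil c/2 \rceil < k$, which (distinguishing the parities of $c$) forces $c \leq \sqrt{4k+1} = O(\sqrt{k})$. In this regime, invoke \Cref{algo:proc1}, \textproc{MaxCut}$(G, I)$, which by \Cref{lem:correctness1} returns an exact maximum cut of $G$ in time $O(2^c \cdot \poly(n)) = 2^{O(\sqrt{k})} \cdot \poly$. Return YES if and only if the computed cut has size at least $k$.

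There is no substantive obstacle in the argument; it is an elementary case split, and the heavy lifting has already been carried out by \Cref{lem:correctness1}. The only points that need checking are the threshold arithmetic $\lfloor c/2 \rfloor \lceil c/2 \rceil < k \Rightarrow c = O(\sqrt{k})$, and the (trivial) observation that the polynomial factor in the runtime of \Cref{algo:proc1} is polynomial in $n$ alone, so that when absorbed into $\poly$ it still fits the $2^{O(\sqrt{k})} \poly$ bound promised by the theorem.
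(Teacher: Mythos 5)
Your proposal is correct and follows essentially the same route as the paper: a threshold test on the clique size (outputting YES when the balanced bipartition of $C$ already guarantees $k$ cut edges), and otherwise concluding $|C| = O(\sqrt{k})$ and running \Cref{algo:proc1} via \Cref{lem:correctness1} in $2^{O(\sqrt{k})}\,\poly$ time. Your handling of the floor/ceiling arithmetic is slightly more explicit than the paper's, but the argument is the same.
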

\begin{proof}
Let $G = (V, E)$ be a split graph with a clique $C$. Consider a partition $(C_1, C_2)$ of $C$ such that
$$ |C_1| = \floor*{\frac{|C|}{2}} \quad \text{and} \quad |C_2| = \ceil*{\frac{|C|}{2}}.$$ 

Note that there are $|C_1| \cdot |C_2|$ edges in $G$ with one endpoint in $C_1$ and the other in $C_2$. Thus, we can immediately output "Yes", if
$$ k \leq \left(\frac{|C|}{2} \right)^{2}. $$

Otherwise, we have $|C| \leq 2 \sqrt{k}$. By \Cref{lem:correctness1}, the exact size of a maximum cut in $G$ can be found using \Cref{algo:proc1} by considering all possible subsets of C in time $ 2^{O(|C|)} \, \poly = 2^{O(\sqrt{k})} \, \poly.$
\end{proof}

\section{Conclusions}
\label{sec:conclusions}%
The algorithm presented in \Cref{sec:expo-algo} for MaxCut on split graphs can be adapted to create faster algorithms for graphs with large homogeneous parts. For instance, with some modifications, we can develop an algorithm for MaxCut on double split graphs with the same running time. Whether this approach can be extended to other graphs, such as chordal graphs, remains unclear. The presented subexponential algorithm in \Cref{sec:subexpo-algo} for the decision variant of MaxCut on split graphs is essentially tight under the ETH. A similar result to \Cref{thm:decision-maxcut} can be obtained for the parametric dual of MaxCut, known as \emph{Edge Bipartization problem}, on split graphs. 

It is important to note that the reduction by Bodlaender and Jansen~\cite{Bodlaender2000} involves a transformation that causes a quadratic increase in the number of vertices $n$. As a result, this reduction cannot be used to rule out a subexponential time algorithm in terms of $n$. Determining whether such an algorithm exists for MaxCut on split graphs remains an interesting open question.

One strategy for developing fast algorithms for problems on split graphs is to use random sampling instead of considering every possible subset of a clique or independent set, as done here. Combining the presented algorithms with polynomial time approximation schemes~\cite{Arora1999} and adapting techniques from~\cite{Lochet2021} for everywhere dense graphs to split graphs seems like a promising way to develop faster algorithms for problems on split graphs.

\medskip\noindent\textbf{Acknowledgments.} I would like to thank Prof.\ Dr.\ Matthias Mnich for suggesting me this topic and for his support in answering my questions and clarifying problems.

\bibliographystyle{abbrvnat}
\bibliography{maxcut}
\end{document}